\Crefname{algocf}{Algorithm}{Algorithms}
\crefname{algocfline}{line}{lines}
\Crefname{invariant}{Invariant}{Invariants}
\definecolor{DarkGray}{rgb}{0.66, 0.66, 0.66}
\definecolor{DarkPowderBlue}{rgb}{0.0, 0.2, 0.6}
\definecolor{fluorescentyellow}{rgb}{0.8, 1.0, 0.0}
\newcounter{note}[section]
\renewcommand{\thenote}{\thesection.\arabic{note}}
\newcommand{\alert}[1]{{\color{red}#1}}
\newcommand{\agnote}[1]{\refstepcounter{note}$\ll${\bf Anupam~\thenote:}
  {\sf \color{blue} #1}$\gg$\marginpar{\tiny\bf AG~\thenote}}
\newcommand{\elnote}[1]{\refstepcounter{note}$\ll${\bf E~\thenote:}
  {\sf \color{gray} #1}$\gg$\marginpar{\tiny\bf EL~\thenote}}
\newcommand{\alert}[1]{}
\newcommand{\agnote}[1]{}
\newcommand{\elnote}[1]{}
\newcommand{\initOneLiners}{%
    \setlength{\itemsep}{0pt}
    \setlength{\parsep }{0pt}
    \setlength{\topsep }{0pt}
%      \usecounter{myLISTctr}
}
  \def\\{}%
  \def\texttt#1{<#1>}%
  \def\textsf#1{<#1>}%
  \def\mathsf#1{<#1>}%
  \def\ensuremath#1{#1}%
  \def\Cref#1{<Label:#1>}%
  \def\eqref#1{<Eq.:#1>}%
\newtheorem{theorem}{Theorem}[section]
\newcommand{\eat}[1]{}
\newcommand{\eps}{\varepsilon}
\newcommand{\sse}{\subseteq}
\newcommand{\R}{\mathbb{R}}
\newcommand{\N}{\mathbb{N}}
\newcommand{\cC}{\mathcal{C}}
\newcommand{\calC}{\mathcal{C}}
\newcommand{\calN}{\mathcal{N}}
\newcommand{\cF}{\mathcal{F}}
\newcommand{\calF}{\mathcal{F}}
\newcommand{\calI}{\mathcal{I}}
\newcommand{\cS}{\mathcal{S}}
\newcommand{\calS}{\mathcal{S}}
\newcommand{\cSdown}{\mathcal{S}^\downarrow}
\newcommand{\cN}{\mathcal{N}}
\newcommand{\poly}{\operatorname{poly}}
\renewcommand{\emptyset}{\varnothing}
\newcommand{\nf}{\nicefrac}
\newcommand{\width}{p}
\begin{document}

\title{A Local Search-Based Approach for Set Covering}

\author{Anupam Gupta\thanks{Carnegie Mellon University, Pittsburgh PA
    15217. Supported in part by NSF awards CCF-1955785, CCF-2006953, and CCF-2224718.} \and Euiwoong Lee\thanks{University of Michigan, Ann Arbor
    MI 48105. Partially supported by Google.} \and Jason Li\thanks{Simons Institute for the Theory of Computing and UC Berkeley, Berkeley CA 94720.}}
    
 \date{}
 
\maketitle

\begin{abstract}
  In the Set Cover problem, we are given a set system with each set
  having a weight, and we want to find a collection of sets that cover
  the universe, whilst having low total weight. There are several
  approaches known (based on greedy approaches, relax-and-round, and
  dual-fitting) that achieve a $H_k \approx \ln k + O(1)$
  approximation for this problem, where the size of each set is
  bounded by $k$. Moreover, getting a $\ln k - O(\ln \ln k)$
  approximation is hard. 

  Where does the truth lie? Can we close the gap between the upper
  and lower bounds? An improvement would be particularly interesting
  for small values of $k$, which are often used in reductions between
  Set Cover and other combinatorial optimization problems.

  We consider a non-oblivious local-search approach: to the best of
  our knowledge this gives the first $H_k$-approximation for Set Cover
  using an approach based on local-search. Our proof fits in one page,
  and gives a integrality gap result as well. Refining our approach by
  considering larger moves and an optimized potential function gives
  an $(H_k - \Omega(\log^2 k)/k)$-approximation, improving on the
  previous bound of $(H_k - \Omega(\nicefrac{1}{k^8}))$ (\emph{R.\
    Hassin and A.\ Levin, SICOMP '05}) based on a modified greedy
  algorithm.
\end{abstract}

\thispagestyle{empty}
\newpage{}
  
\setcounter{page}{1}

\section{Introduction}
\label{sec:introduction}

(Weighted) Set Cover is one of the most important problems in the approximation algorithms literature. 
Given a set system $(U, \calS)$ where each set $S \in \calS$ has
weight $w(S) > 0$, the Set Cover problem asks to find a subcollection
$\calF \subseteq \calS$ that covers the universe (i.e., $\cup_{S \in
  \calF} S = U$) while minimizing the total weight $\sum_{S \in \calF}
w(S)$. This problem is NP-hard as long as the sets have size at least
three (the edge-cover problem can be solved in polynomial time). 
As the flagship problem in two standard textbooks in approximation algorithms~\cite{Vazirani01, WS11}, 
and as an abstract setting capturing numerous covering problems, 
it has always been an important testbed for new algorithmic techniques. 

Let $k$-Set Cover be the special case of Set Cover where every set has
at most $k$ elements.
The simple greedy algorithm that iteratively selects the set
maximizing the current density (i.e., the ratio of the number of
uncovered elements in the set to its weight) guarantees an $H_k$-approximation, 
where $H_k = 1 + \nf12 +  \dots + \nf1k = \ln k +O(1)$ is the $k$th
harmonic number~\cite{johnson1974approximation, lovasz1975ratio,
  chvatal1979greedy}. It can be analyzed by the dual-fitting method,
upper bounding the integrality gap of the standard LP relaxation by
$H_k$ as well. Another proof of the integrality gap comes via the
relax-and-round approach (\cite{Young}, see also \S\ref{sec:r-and-r}). 
% \elnote{Relax-and-Round? Randomized rounding analyzed by stopping time
%   method? (Neal Young's notes)} \agnote{That's the only one we found,
%  I think.}
These algorithms are almost optimal due to the $(1-o(1)) \ln
n$-hardness of Feige~\cite{Feige98} and its refinement to the $\ln k -
O(\ln \ln k)$-hardness for $k$-Set Cover~\cite{Trevisan01}, which even
holds against $n^{f(k)}$-time algorithms for any computable function $f$. 

How about local search, one of the most intuitive and popular
algorithm design techniques? It maintains a solution (a set cover in
this case), and in each iteration, it tries to find a {\em local move}
that swaps at most $p$ sets between the current solution and the
remaining sets. If there exists a local move that results in a better
set cover, execute the local move; otherwise, output the current set
system. While it has been successfully applied for many problems
including bounded-degree network design~\cite{furer1992approximating,
  furer1994approximating}, Facility
Location~\cite{korupolu2000analysis, charikar2005improved} and
$k$-Median~\cite{arya2004local, gupta2008simpler}, the local search
cannot yield any finite approximation ratio for Set Cover, at least
when the local width is one; simply consider the example where the
universe has $k$ elements, the optimal solution contains $k$ singleton
sets of weight $\eps$, but the current solution consists of the set
containing all the elements but has large weight $1$. As $\eps \to 0$
for fixed $k$, the gap between the two solutions becomes unbounded.
 
Is there a way to ``redeem'' local search?  One reason that the above
example is bad for local search is that the {\em potential} that the
standard local search is trying to optimize, which is the same as the
total weight of the current solution, is too rigid; while adding a
singleton set from the optimal solution can be seen as a {\em
  progress}, since the large set is still needed to cover all
elements, adding the singleton set only worsens the total weight and
is not executed.  To fix this issue, {\em non-oblivious local search
  (NOLS)} tries to find a local move improving a carefully designed
potential different from the objective function of the problem.  
Originally defined by Khanna, Motwani, Sudan, and Vazirani~\cite{khanna1998syntactic}, 
it has been recently shown to work well for problems
including Submodular Optimization~\cite{filmus2014monotone}, Tree
Augmentation and Steiner Tree~\cite{TZ22} (which directly inspired
this paper), Steiner forest~\cite{GGKMSSV18} and
$k$-Median~\cite{cohen2022improved}.  Our first result is the
following ``redemption'' of local search for Set Cover showing that
NOLS with a natural %carefully designed (but
% still standard)
potential can exactly match the $H_k$-approximation guarantee,
up to an arbitrarily small constant error $\eps > 0$ that ensures that
the local search terminates in polynomial time. 
%\agnote{Do we say $H_k+\eps$ for any $\eps > 0$?}

\begin{theorem}
  \label{thm:main-one}
  For any $\eps > 0$, there exists a width-$1$ non-oblivious local search algorithm that can be implemented in time $\poly(n, 1/\eps)$ and yields an $H_k + \eps$-approximation for $k$-Set Cover. 
\end{theorem}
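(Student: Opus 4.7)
My plan is to prove \Cref{thm:main-one} by an LP dual-fitting analysis layered on top of the non-oblivious local search paradigm. The goal is to design a potential $\Phi:2^{\calS}\to\RR_{\geq 0}$ so that (i) any width-$1$ local optimum $F$ of $\Phi$ comes equipped with an LP dual solution $y=y(F)\in\RR_{\geq 0}^{U}$, (ii) $y$ is feasible for the set-cover LP dual (so $\sum_{e\in T}y_e\le w(T)$ for every $T\in\calS$), and (iii) $\sum_e y_e\ge w(F)/H_k$. Given these, $w(F)/H_k\le\sum_e y_e\le w(F^*)$ by LP duality (or equivalently by summing (ii) over $T^*\in F^*$ and using that every $e$ is covered at least once by $F^*$). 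The extra $\eps$ loss comes from the standard $\eps$-improvement rule needed to make the local search run in polynomial time.

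The candidate potential I would try first has the element-wise form
\[
  \Phi(F)\;=\;w(F)\;-\;\sum_{e\in U}\gamma_e(F),
\]
where $\gamma_e(F)$ depends only on the multiset of densities $\{w(S)/|S|:S\in F,\,e\in S\}$, combined via harmonic coefficients --- a natural choice is $\gamma_e(F)=\sum_{j=1}^{c_e(F)}\rho_e^{(j)}(F)/j$, where $c_e(F)$ is the number of sets in $F$ containing $e$ and $\rho_e^{(j)}(F)$ is the $j$-th smallest such density. Because $\Phi$ decomposes element-wise, the discrete gradient under a single-set add or remove move factors through the affected elements, and I extract $y_e(F)$ as the marginal reward gained by one additional covering set for $e$. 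Condition (ii) is then exactly the inequality $\Phi(F\cup\{T\})\ge\Phi(F)$ rewritten in the form $w(T)\ge\sum_{e\in T}y_e(F)$, and summing this over $T^*\in F^*$ gives $\sum_e y_e\le w(F^*)$.

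Condition (iii) comes from summing the \emph{removal} (and, where feasibility forces it, \emph{swap}) local-optimality inequalities over $S\in F$. With harmonic reward coefficients, the per-set marginals telescope so that each element $e$ has its $y_e$ counted with total weight at most $H_{c_e(F)}\le H_k$, yielding $w(F)\le H_k\sum_e y_e$ as required. For polynomial time I would enforce a $(1-\eps/\poly(n))$-improvement rule: this bounds the iteration count by $\poly(n,1/\eps)$ via telescoping of $\Phi$, and propagates $\eps$-slack through the two dual inequalities to turn the $H_k$ bound into $H_k+\eps$.

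The main obstacle I anticipate is the careful calibration of $\gamma_e$: the add-move and remove-move inequalities must both be tight against the same $y_e$, and the removal-side sum must collapse to exactly $H_{c_e(F)}$ per element rather than a weaker harmonic quantity. A secondary technical point is that some sets $S\in F$ are not removable because $S$ uniquely covers some element $e^\star$; these must be handled either by boosting $\gamma_{e^\star}$ when $c_{e^\star}(F)=1$, or by using a swap move ``$S\to T^*$'' for any $T^*\in F^*$ covering $e^\star$ and carrying the induced slack through the final accounting.
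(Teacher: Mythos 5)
There is a genuine gap: the framework you propose---keep full sets with multiply-covered elements, use an element-wise reward built from the densities $w(S)/|S|$ of the covering sets with harmonic coefficients, and extract a dual from add/remove/swap optimality---fails on exactly the instance that motivates non-oblivious local search here. Take $|U|=k$, one set $T=U$ of weight $1$, and $k$ singletons of weight $\eps$ forming $\cF^*$. At $F=\{U\}$ your reward is $\gamma_e=1/k$ for each element; adding a singleton $\{e\}$ of weight $\eps<1/(2k)$ inserts the density $\eps$ into the first slot and demotes the density $1/k$ to coefficient $1/2$, so $\gamma_e$ changes by $\eps-1/(2k)<0$ and $\Phi(F)=w(F)-\sum_e\gamma_e(F)$ \emph{increases} by $1/(2k)$: the add move is not improving. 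Since $U$ uniquely covers every element, no removal is feasible, and no width-$1$ swap of $U$ against a single set of $\cF^*$ keeps a cover. Hence $F=\{U\}$ is a width-$1$ local optimum of your potential with ratio $1/(k\eps)$, which is unbounded. This also shows your conditions (ii) and (iii) cannot coexist for this $\gamma$: feasibility of $y$ against the cheap singletons forces $y_e$ to be at most the (here negative) add-marginal, so $\sum_e y_e$ cannot reach $w(F)/H_k$. Your anticipated repairs point the wrong way: boosting $\gamma_{e^\star}$ when $c_{e^\star}(F)=1$ makes the add-marginal even smaller (adding a second set destroys the boost), and the swap $S\to T^*$ is infeasible in this instance.

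The missing idea---and what the paper actually does---is to change the solution space and the move, not just the coefficients. Pass to the downward closure $\cSdown$ so that the maintained solution is always a partition of $U$ into residual sets, use the Rosenthal potential $\Phi(\cF)=\sum_{T\in\cF}w(T)H_{|T|}$ on residual sizes, and let the single width-$1$ move be ``add $S\in\cSdown$ and replace every $T\in\cF$ by $T\setminus S$.'' Then adding a cheap singleton in the bad example is improving, since the heavy set's contribution shrinks through $H_{|T|}$ even though the set is never deleted ($\eps H_1-(H_k-H_{k-1})=\eps-1/k<0$); no removal or swap moves are needed at all. The analysis uses only the test moves $S\in\cF^*$: local optimality gives $w(S)H_{|S|}\ge\sum_{e\in S}\bar{w}(e)$ with $\bar{w}(e)=w(T_e)/|T_e|$, and summing over the optimal partition (or, closest in spirit to your dual-fitting plan, multiplying by LP values $x^*_S$ and summing) yields $w(\cF)\le H_k\, w(\cF^*)$. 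Your polynomial-time device (accept only moves improving the potential by at least $\delta w(\cF)$ with $\delta=\eps/|U|$) coincides with the paper's and is fine, but the core potential-and-move design in your plan would have to be replaced rather than merely calibrated.
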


Our proof also shows that any local optimum has weight at most $H_k$
times an optimal solution to the LP relaxation for set cover, thereby
giving yet another proof of its integrality gap.

We then explore the power of NOLS beyond the $H_k$-approximation.
While Trevisan's $(\ln k - O(\ln \ln k))$-hardness shows that we
cannot dramatically improve it, there are still unanswered
questions, especially for small values of $k$, which are important for
hardness of other well-known problems including Steiner
Tree~\cite{bern1989steiner, thimm2001approximability}. 
%\elnote{and more?}  
For \emph{unweighted} $k$-Set Cover, there is a long series
of works~\cite{goldschmidt1993modified, halldorsson1995approximating,
  Halldorsson96, duh1997approximation, levin2009approximating,
  athanassopoulos2009analysis, FY11} giving an
$(H_k - \beta_k)$-approximation where $\beta_k \geq 0.5$ for every
$k \geq 3$ and approaches to $0.6402$. (Some of these works even use a
combination of oblivious local search to solve a packing problem, and
greedy to extend the solution to a matching.)

But the status for weighted $k$-Set Cover---which is the problem we
focus on---is understood far poorly. The best approximation
ratio remains $H_k - \Omega(\nicefrac{1}{k^8})$~\cite{HL05}, obtained
by a variant of the greedy algorithm. We make progress on this
direction, and prove the following improved approximation guarantees
for $k$-Set Cover.

\begin{theorem}
\label{thm:main-two-k}
For any $\eps > 0$, there exist width-$2$ and width-$k$ non-oblivious local search algorithms for $k$-Set Cover that yield $(H_k - \Omega(\nf{1}{k}) + \eps)$ and $(H_k - \Omega(\nf{(\log k)^2}{k}) + \eps)$-approximations respectively.
\end{theorem}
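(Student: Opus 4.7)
The plan is to extend the width-$1$ framework of \Cref{thm:main-one} in two orthogonal directions: permit larger swap moves, and augment the harmonic potential with a correction that rewards the current solution for lacking redundant coverage. As in \Cref{thm:main-one}, the analysis exhibits at a local optimum $\calF$ a family of non-improving moves of size $\leq p$, and takes a nonnegative combination of the corresponding inequalities to deduce $w(\calF) \leq (H_k - \delta_p)\,w(\calF^*) + O(\eps)\,w(\calF)$, where $(p,\delta_p) \in \{(2,\Omega(1/k)),\,(k,\Omega((\log k)^2/k))\}$. The final ratio follows by choosing $\eps$ sufficiently small.

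For the width-$2$ result, I would keep the same harmonic potential as in \Cref{thm:main-one} and enlarge the move inventory by adding ``delete two current sets $S_1,S_2 \in \calF$, insert a single optimum set $S^* \in \calF^*\setminus\calF$''. The utility of these new inequalities is that an element covered by two or more sets of $\calF$---which must be charged pessimistically in the width-$1$ analysis---can be accounted for more aggressively, because the single set $S^*$ can absorb both of its current covers simultaneously. Combining the width-$1$ inequalities with the width-$2$ ones using carefully chosen nonnegative coefficients should reduce the implicit dual charge on every $S^* \in \calF^*$ by an additive $\Omega(w(S^*)/k)$ term, which is exactly the content of an $(H_k-\Omega(1/k))$-bound.

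For the width-$k$ result, I envision a layered analysis. For each $S^*\in\calF^*$, partition its elements into $\Theta(\log k)$ geometric ``multiplicity buckets'' according to the number of sets of $\calF$ that cover them. Within each bucket I would exhibit a width-$k$ move that deletes an $\Omega(\log k)$-size subfamily of the current sets touching that bucket and inserts $S^*$ (together with a small number of auxiliary optimum sets). Combined with a potential correction that decays geometrically with coverage multiplicity---this is the ``optimized'' potential alluded to in the abstract---each bucket should contribute an $\Omega((\log k)/k) \cdot w(S^*)$ saving to the implicit dual charge on $S^*$; summing over the $\Theta(\log k)$ buckets produces the claimed $\Omega((\log k)^2/k)$ improvement.

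The main obstacle is the width-$k$ argument. One must guarantee that the per-bucket savings are truly independent (so that they add rather than partially cancel), and that at every local optimum some $\Omega(\log k)$-size sub-swap inside each bucket actually exists; both points hinge on a delicate choice of potential correction together with a tight combinatorial argument. Care is also needed to avoid double-counting when a single set of $\calF$ participates in moves attributed to several distinct $S^* \in \calF^*$, which I would handle by an LP-style averaging over an assignment of current sets to optimum sets. Polynomial-time convergence is enforced exactly as in \Cref{thm:main-one}, by restricting to moves with at least a $(1+\eps/\poly(n))$ multiplicative improvement, which accounts for the additive $\eps$ in the theorem statement.
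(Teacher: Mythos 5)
There is a genuine gap, and it is in the core mechanism you propose. The paper works with the downward closure $\cSdown$ and maintains $\cF$ as a \emph{partition} of $U$, and the hard configurations (also the lower-bound instances of Section~\ref{sec:tight-lower-bounds}) have both $\cF$ and $\cF^*$ partitions; so the quantity your analysis is keyed to---elements covered by two or more sets of $\cF$, i.e.\ ``redundant coverage multiplicity''---is identically $1$ exactly where the improvement is needed, and both your width-$2$ ``absorb both covers'' argument and your width-$k$ multiplicity-bucket scheme have nothing to bite on. Concretely, your width-$2$ plan (keep the Rosenthal potential, add ``delete $S_1,S_2\in\cF$, insert $S^*\in\cF^*$'' moves) provably cannot reach $H_k-\Omega(\nf1k)$: take $\cF^*$ to be disjoint $k$-sets of weight $1$ and $\cF$ the $n$ singletons, each of weight $\nf{H_k}{k}$. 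Adding one or two optimum sets changes the Rosenthal potential by exactly $0$ (increase $H_k$ or $2H_k$, decrease $k\cdot\nf{H_k}{k}$ per added set), and deleting two singletons while inserting one optimum set \emph{increases} it by $H_k(1-\nf2k)>0$; so this is a local optimum for your move inventory with ratio exactly $H_k$. Any $\Omega(\nf1k)$ gain must therefore come from changing the potential (or adding a post-processing step), not from larger deletion moves alone.

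The paper's route is different on both counts. For width $2$, the moves are pure \emph{additions} of pairs $S,S'\in\cSdown$ (existing sets are just trimmed, so feasibility is automatic---an issue your delete-and-insert moves never address), and the gain comes from the fact that when one added set removes $\ge 2$ elements from the same $T\in\cF$ the potential drops by $H_{|T|}-H_{|T|-2}\ge \nf2{|T|}+\nf1{|T|-1}-\nf1{|T|}$, strictly more than the width-$1$ accounting; the singleton bottleneck exhibited above is then killed either by a Knapsack-Cover post-processing (\Cref{lem:postprocess}) or, better, by the custom potential $f_t=\nf1t-\nf1{4kt(t-1)}$, whose $F_k\le H_k-\nf1{8k}$ is precisely where the $\Omega(\nf1k)$ comes from (\Cref{thm:two-swaps-better}). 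For width $k$, the test moves add, for each $T\in\cF$, \emph{all} optimum sets meeting $T$ simultaneously, so $T$ vanishes and contributes its full $F_{|T|}\approx\ln|T|$; the $\Omega(\nf{(\log k)^2}{k})$ gain is again encoded in the optimized potential $f_t=\nf1t-\nf{\log t}{8kt}$, for which one checks $\phi_t\ge1$ while $F_k=H_k-\Theta(\nf{(\log k)^2}{k})$. Your layered bucket argument, besides resting on the nonexistent multiplicity structure, leaves unresolved exactly the points you flag (independence of per-bucket savings, existence of the sub-swaps), so it does not constitute a proof.
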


Note that a width-$2$ local search can be implemented in time
$\poly(n, 1/\eps)$ and a width-$k$ one can be na\"{\i}vely implemented in time
$n^{O(k)} \poly(1/\eps)$. 
We present clean locality gap results for width $1$, $2$, and $k$ swaps in Section~\ref{sec:set-cover},~\ref{sec:improvement-SC},~\ref{sec:moves-width-k} respectively and show how to implement them in polynomial time in Section~\ref{sec:poly-time-impl}.
In Section~\ref{sec:tight-lower-bounds}, we provide matching lower bounds showing that these two results are tight for a large class of natural potentials.

\section{Set Cover}
\label{sec:set-cover}

Consider a weighted set system $(U, \calS)$ with weights
$w : \calS \to \R^+$ where each $S \in \calS$ has cardinality at most
$k$.  Define the downwards closure $\cSdown$ of the set
system as containing all sets
$\{ T \mid \exists S \in \cS, T \sse S\}$, where each subset has the same
cost as the original set.  
Letting $\calS \leftarrow \cSdown$ does not change the optimal value, 
so we assume that $\calS$ is downwards-closed.
With this assumption, we can further assume the optimal solution $\cF^*$ 
forms a partition of the universe $U$, and our algorithms will maintain 
the solution $\cF$ that also forms a partition of $U$. 
(Letting $\calS \leftarrow \cSdown$ might significantly increase the number of sets. 
In Section~\ref{sec:poly-time-impl}, we show how to efficiently implement it.)

For any collection $\cF$ of sets that partition $U$, define the
\emph{Rosenthal potential}~\cite{rosenthal1973class}:
\begin{gather}
  \Phi(\cF) := \sum_{S \in \cF} w(S) H_{|S|}.   \label{eq:rosenthal}
\end{gather}
For each element $e$, let $\bar{w}_\mathcal{F}(e) := \frac{w(S)}{|S|}$ for the set
$S \in \cF$ that covers $e$. (We omit the subscript $\mathcal{F}$ if it is clear from context.)
Then $\sum_e \bar{w}(e) = w(\cF)$.

\subsection{An $H_k$-competitive Local Search Algorithm}
\label{sec:simpler}

Consider the following \emph{single (set) local moves}:
\begin{quote}
  Add in a single set $S \in \cSdown$, and then for each $T \in \cF$
  in the current solution, replace $T$ by $T \setminus S$ to get
  back a new set cover solution that is a partition.
\end{quote}
If this move decreases the Rosenthal potential---i.e., if this is an
\emph{improving} local move---we move to this resulting solution. 

\begin{theorem}[Single-Set Moves]
  Suppose $\cF$ is a local optimum, i.e., there are no improving local
  moves. Then $w(\cF) \leq H_k \cdot w(\cF^*)$. 
\end{theorem}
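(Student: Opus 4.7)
The plan is to probe the locally optimal $\cF$ with one test move per set $S^* \in \cF^*$: add $S^*$ and simultaneously replace each $T \in \cF$ by $T \setminus S^*$. Since $S^* \in \cSdown$ and every truncation $T \setminus S^*$ lies in $\cSdown$ at cost at most $w(T)$ (by the downward-closure construction), this is a valid single-set local move, so local optimality of $\cF$ forces $\Phi(\cF') - \Phi(\cF) \ge 0$. Expanding the Rosenthal potential and using $w(T \setminus S^*) \le w(T)$ to bound the truncation contributions gives
\begin{align*}
0 \;\le\; \Phi(\cF') - \Phi(\cF) \;\le\; w(S^*)\, H_{|S^*|} \;-\; \sum_{T \in \cF} w(T)\,\bigl(H_{|T|} - H_{|T \setminus S^*|}\bigr),
\end{align*}
which rearranges to the key inequality $\sum_{T \in \cF} w(T)\,(H_{|T|} - H_{|T \setminus S^*|}) \le w(S^*)\, H_{|S^*|}$.

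Next I would squeeze each harmonic gap by its smallest term. Writing $t := |T|$ and $j := |T \cap S^*|$, the bound $H_t - H_{t-j} = \sum_{i=t-j+1}^{t} 1/i \ge j/t$ yields $w(T)\,(H_{|T|} - H_{|T \setminus S^*|}) \ge w(T)\,|T \cap S^*|/|T| = \sum_{e \in T \cap S^*} \bar{w}(e)$. Summing over all $T \in \cF$ (whose intersections with $S^*$ tile $S^*$) and using $H_{|S^*|} \le H_k$ on the right-hand side gives $\sum_{e \in S^*} \bar{w}(e) \le H_k \cdot w(S^*)$. Finally, summing this inequality over $S^* \in \cF^*$ and invoking that $\cF^*$ partitions $U$, together with $\sum_e \bar{w}(e) = w(\cF)$, delivers $w(\cF) \le H_k \cdot w(\cF^*)$.

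There is no deep obstacle; the heart of the proof is choosing the right probing move and exploiting the harmonic-weighted form of $\Phi$, which is exactly what converts the crude pointwise bound $1/i \ge 1/t$ into the full $H_k$ factor. The one subtlety worth flagging is the cost convention inside $\cSdown$: the inequality $w(T \setminus S^*) \le w(T)$ is immediate from the downward-closure construction, and this is what keeps the first display honest when weights shrink under truncation. The same calculation, with $S^*$ replaced by support elements of a fractional LP solution and weighted by their LP values, yields the integrality-gap corollary promised after the theorem.
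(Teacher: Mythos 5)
Your proposal is correct and follows essentially the same route as the paper: the same test moves (adding each $S^* \in \cF^*$), the same local-optimality inequality for the Rosenthal potential, the same bound $H_{|T|} - H_{|T \setminus S^*|} \ge |T \cap S^*|/|T|$ converting harmonic gaps into $\bar{w}$-mass, and the same summation over the partition $\cF^*$. The only cosmetic difference is your inequality $w(T \setminus S^*) \le w(T)$, which in the paper's convention (subsets inherit the original set's cost) holds with equality, so nothing changes.
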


\begin{proof}
  To show the locality gap, we consider a specific set of local moves
  (called \emph{test moves}). Since there are no improving local
  moves, each of these test moves do not reduce the potential, thereby
  giving us relationships between the costs of some solutions related
  to the local and optimal solution. Combining these then proves the
  theorem.

  Indeed, consider using any of the sets in the optimal solution
  $S \in \cF^*$ as a local move from $\cF$. The resulting potential
  function change is
  \[ w(S) H_{|S|} - \sum_{T \in \cF} w(T) \big[ H_{|T|} - H_{|T
      \setminus S|} \big] \geq 0. \] 
      Since $\cF$ is a partition of $U$, the second term on the LHS is
  \begin{gather}
    \sum_{T \in \cF} \sum_{i = 0}^{|T\cap S| - 1} \frac{w(T)}{|T|-i}
   \geq \sum_{T \in \cF} \frac{w(T)}{|T|} |T \cap S| = \sum_{T \in
      \cF} \sum_{e \in S \cap T} \bar{w}(e) = \sum_{e \in S}
    \bar{w}(e). \label{eq:2}
  \end{gather}
    Therefore we have for each $S \in \cF^*$ that
  \begin{gather}
    w(S) H_{k} - \sum_{e \in S} \bar{w}(e) \geq 0. \label{eq:1}
  \end{gather}
  Summing over all sets in $\cF^*$, which we also imagine is a
  partition, we get
  \[ H_k \sum_{S \in \cF^*} w(S)  - \sum_e \bar{w}(e) \geq 0 \qquad
    \implies \qquad w(\cF) \leq H_{k} w(\cF^*) . \qedhere \]
\end{proof}

\subsection{An Integrality Gap Result}

A small change bounds the cost against any solution to the standard
linear programing relaxation:
\[ \min \big\{ \sum_S w(S) x_S \mid \sum_{S: e \in S} x_S \geq 1, x \geq 0
  \big\}. \] Indeed, suppose $x^*$ is any feasible solution, and we
consider local moves with each of the sets sets in the support of
$x^*$. Multiplying~(\ref{eq:1}) with $x^*_S$ and summing gives
\begin{gather}
  \sum_S H_{|S|} w(S) x^*_S - \sum_{e} \bar{w}(e) \sum_{S: e \in S}
  x^*_S \geq 0.
\end{gather}
But $\sum_{S: e \in S} x^*_S \geq 1$ by feasibility of the LP, so we
infer that
\[ \sum_e \bar{w}(e) =  w(\cF) \leq \sum_{S \in \cF^*} w(S)x^*_S \; H_{|S|} \leq H_k \cdot
  (w^\intercal x^*). \]

\section{An Improvement Using Double Moves}
\label{sec:improvement-SC}

The above analysis suggests one avenue for improvement: if the move
adding set $S \in \cF^*$ removes more than one element from some set
$T \in \cF$, then the inequality~(\ref{eq:2}) bounds the decrease in
potential by $\frac{|T \cap S|}{|T|}$, whereas the actual decrease is
$H_{|T|} - H_{|T \setminus S|}$, which is possibly greater. Concretely, if $|T| = k$ and
$|T \cap S| = 2$, then we claim an improvement of $\nf2k$, whereas the
actual improvement is $\nf1k + \nf1{k-1}$. In this section we show how
this idea can be used to get an improvement.

The algorithm is now a natural ``width-two'' generalization of the
above local search:
\begin{quote}
  Add in two sets $S, S' \in \cSdown$ to $\cF$, and replace each
  existing set $T \in \cF$ by $T \setminus (S \cup S')$. If the
  resulting partition has a smaller potential value, move to it.
\end{quote}
We allow $S=S'$, which captures the case of adding a single set. Hence
local optima with these moves have cost at most $H_k w(\cF^*)$ by the
previous section; we want to show a better bound. Let us first do this
for a special case, and then show how to remove this assumption
(in~\Cref{lem:postprocess}). 

\begin{theorem}[Double Moves]
  \label{thm:double}
  Consider a solution $\cF$ that is a local optimum for the above
  width-two local search with the Rosenthal potential. Let
  $\cF_1 \sse \cF$ be the subcollection of sets in $\cF$ having unit
  size, and suppose $w(\cF_1) \leq 0.99\, w(\cF)$. Then
  \[ w(\cF) \leq H_k(1- \Theta(\nf1{k^2}))\cdot w(\cF^*). \]
\end{theorem}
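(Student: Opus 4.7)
The plan is to combine single-set inequalities from the previous subsection with the analogous width-$2$ inequalities using nonnegative coefficients. Assign the single-set test move at $S \in \cF^*$ weight $c_S$ and the width-$2$ test move at the unordered pair $\{S,S'\}$ (with $S \neq S'$ in $\cF^*$) weight $c_{S,S'}$. Under the normalization
\[
c_S + \sum_{S' \neq S} c_{S, S'} = 1 \quad \text{for every } S \in \cF^*,
\]
disjointness of $\cF^*$ ensures that every $w(S)$ and every $\bar{w}(e)$ with $e \in S$ receive total coefficient $1$ in the summed inequality, so its LHS collapses to $H_k\, w(\cF^*) - w(\cF)$. The task reduces to choosing $c$ so that the RHS combination of slacks is $\Omega(w(\cF)/k^2)$.

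Call $T \in \cF$ with $|T| \geq 2$ \emph{easy} if some $S \in \cF^*$ satisfies $|T \cap S| \geq 2$, and \emph{hard} otherwise. An easy $T$ already donates at least $w(T)/(|T|(|T|-1))$ to $\mathrm{slack}_1$ of the dominant $S$ via the $j = 1$ term of~\eqref{eq:2}, but a hard $T$ donates zero to all single-set slacks and forces a width-$2$ move.

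The key coefficient choice is $c_{S,S'} = 1/(2k^2)$ on every edge of the \emph{interaction graph} $G$ on $\cF^*$, where we join $S$ and $S'$ iff some $T \in \cF$ with $|T| \geq 2$ meets both. The main feasibility check is bounding $\deg_G(S) \leq k(k-1)$: each element $e \in S$ lies in a unique $T(e) \in \cF$ of size at most $k$ contributing at most $k-1$ neighbors, and $|S| \leq k$. So $\sum_{S' \neq S} c_{S,S'} \leq (k-1)/(2k) \leq 1/2$, and thus $c_S \geq 1/2$.

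Finally, tally the slack. A hard $T$ of size $t$ has its elements in $t$ distinct $\cF^*$-sets, and each of the $\binom{t}{2}$ pairs formed by these sets is an edge of $G$ contributing at least $w(T)/(t(t-1))$ to $\mathrm{slack}_2$; the identity $\binom{t}{2} \cdot \frac{1}{t(t-1)} = \frac{1}{2}$ gives an exact gain of $w(T)/(4k^2)$ from this $T$. An easy $T$ similarly contributes at least $c_{S^*_T} \cdot w(T)/(|T|(|T|-1)) \geq w(T)/(2k^2)$ through single-set slack. Summing over $T \in \cF \setminus \cF_1$ and using $w(\cF \setminus \cF_1) \geq 0.01\, w(\cF)$ gives $H_k\, w(\cF^*) - w(\cF) \geq \Omega(w(\cF)/k^2)$, which rearranges to the claim. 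The main subtlety is that even though each pair receives only $\Theta(1/k^2)$ weight, the number $\binom{t}{2}$ of pairs hitting each hard $T$ exactly cancels the per-pair slack $1/(t(t-1))$; without this cancellation one would lose another $k^2$ factor and need a more delicate (fractional) matching.
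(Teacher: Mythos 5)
Your proof is correct, and while it rests on the same two pillars as the paper's argument---width-$2$ test moves that pair optimal sets meeting a common $T\in\cF$ so as to harvest the second-order slack $w(T)\bigl[(H_{|T|}-H_{|T|-2})-\nicefrac{2}{|T|}\bigr]=\nicefrac{w(T)}{(|T|(|T|-1))}$, together with a normalization making every $S\in\cF^*$ carry total coefficient $1$ so the first-order terms collapse to $H_k\,w(\cF^*)-w(\cF)$---the way you combine the moves is genuinely different. The paper constructs an explicit integral multiset: around each $T$ its $\cF^*$-neighbors are paired cyclically (or its unique neighbor is added twice), and each $S$ is then padded with extra additions up to exactly $2k$ uses; since every element of every $T$ is removed exactly twice within the pair moves, each non-singleton $T$ automatically earns a gain of $\nicefrac{w(T)}{(|T|-1)}$ against the $2k$ normalization, with no case analysis, no degree bound, and (using only monotonicity of the $f_i$) a calculation that transfers verbatim to the custom potentials of \S\ref{sec:improved-two-swap} and \S\ref{sec:moves-width-k}. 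You instead take a fractional combination: uniform weight $\nicefrac{1}{2k^2}$ on every interacting pair and the complementary weight on singles, which is what forces your feasibility check $\deg_G(S)\le k(k-1)$ (correct) and the easy/hard dichotomy, with easy sets harvesting their slack from the single move at a dominating optimal set via the $i=1$ term of \eqref{eq:2} together with $c_S\ge\nicefrac12$ (also correct), and hard sets harvesting $\binom{t}{2}\cdot\frac{1}{t(t-1)}\cdot\frac{1}{2k^2}=\frac{1}{4k^2}$ per unit weight. The constants are comparable (your $\nicefrac{w(T)}{4k^2}$ per non-singleton $T$ versus roughly $\nicefrac{w(T)}{2k^2}$ in the paper), and both yield $H_k(1-\Theta(\nicefrac{1}{k^2}))$. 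What your route buys is flexibility---the pair weights need not arise from any explicit pairing, only from a degree bound---at the price of the extra case split; note that the easy-case slack $f_{|T|-1}-f_{|T|}$ can vanish for a general non-increasing potential, so the paper's pairing scheme is the one that extends directly to the optimized potentials, though this is irrelevant for the present statement, which concerns only the Rosenthal potential.
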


\begin{proof}
  The proof again goes via analyzing a collection of test moves; 
  these try to add in at most two sets at a time. To get the
  \emph{test moves}, consider a bipartite graph whose nodes are
  the sets in $\cF^*$ and those in $\cF$, and there is an edge between $S \in \cF^*$ and $T \in \cF$ iff $S\cap T\ne\emptyset$. 
  For a vertex $S$, let $\cN_S$ be the set of its neighbors.
  There are two kinds of test moves:
  \begin{enumerate}
  \item For a set $T \in \cF$, let
    $S_0,S_1,\ldots,S_{\ell-1}\in\mathcal F^*$ be its
    neighbors in an arbitrary order. If $\ell=1$, then try to add in
    $S_0$ twice. Else, for each index $0\le i<\ell$, try to add in
    $S_i$ and $S_{(i+1)\bmod\ell}$ together.
  \item A set $S\in \cF^*$ is added exactly $2|\mathcal N_S|$ times
    above, twice for each $T\in\mathcal N_S$. Add in $S$ another
    $2k - 2|\mathcal N_S|$ number of times.
  \end{enumerate}

  The local optimality ensures that none of the moves above decreases
  the potential.  Let us consider the total potential change caused by
  the above moves. Firstly, each set $S \in \cF^*$ is added exactly
  $2k$ times, so the total potential increase by adding it is exactly
  $2kw(S) H_{|S|} \leq 2k H_k \cdot w(S)$. Moreover, let us consider the potential decrease
  due to the removal of elements from each set in $\cF$. Indeed, for a
  set $T \in \cF$, let $\ell$ denote its neighborhood size in the
  bipartite graph. 
  \begin{itemize}
  \item If $\ell=1$, then $T$ is a subset of $S_0$, its only neighbor. 
  Thus the potential from $T$ is decreased by
    $w(T)H_{|T|}$ \emph{twice}, for a total of $2w(T)H_{|T|}$. If $\ell>1$,
    then for each pair $S,S'$ added together, the potential from $T$
    is decreased by $w(T)(H_{|T|}-H_{|T|-|(S \cup S') \cap
      T|})$. Since $|(S \cup S') \cap T| = |S\cap T|+|S'\cap T|\ge2$, the average per-element
    decrease,
    \[ \frac { w(T)(H_{|T|}-H_{|T|-|(S\cup S')\cap T|}) } {
        |(S \cup S')\cap T| } ,\] is at least what it would be if
    $|(S \cup S') \cap T|=2$, i.e., $w(T)(H_{|T|}-H_{|T|-2})/2$. So
    the overall decrease is at least
    \[ (|(S\cup S')\cap T|)\;w(T)\;(H_{|T|}-H_{|T|-2})/2 .\] Since the
    sum of $(|(S \cup S')\cap T|)$ over all added pairs $S,S'$ is
    exactly $2|T|$, the overall potential decrease is at least
    $|T|w(T)(H_{|T|}-H_{|T|-2})$.
    
    If $|T|=1$, then $\ell=1$ and the potential decrease is
    $2w(T)H_{|T|}=2w(T)$. If $|T|\ge2$, then regardless of whether
    $\ell=1$ or $\ell>1$, the potential decrease is at least
    $|T|w(T)(H_{|T|}-H_{|T|-2})$.
  \item In total, each $S \in \calN_T$ is added exactly $2k$ times,
    so each element of $T$ is removed a total of $2k$ times. Two of
    these $2k$ removals are accounted above, so the remaining
    potential decrease over all elements is at least
    $w(T) \cdot (2k-2) \cdot |T| \cdot (H_{|T|} -
    H_{|T|-1})=(2k-2)w(T) $.
  \end{itemize}
  The total potential decrease, which is at most $0$, is at least
  \begin{gather}
    \sum_{T\in\mathcal F_1}2w(T) + \sum_{T\in\mathcal
      F\setminus\mathcal F_1}|T|w(T)(H_{|T|}-H_{|T|-2})
    +\sum_{T\in\mathcal F} (2k-2)w(T)- H_k\sum_{S\in\mathcal
      F^*}2kw(S) , \label{eq:main-two}
  \end{gather}
    where $\cF_1$ is the collection of sets in
  $\cF$ of unit size. Observe that
  \[ |T|(H_{|T|}-H_{|T|-2}) =
    |T|\left(\frac1{|T|}+\frac1{|T|-1}\right)=2+\frac1{|T|-1} \ge
    2+\frac1{k-1}. \] By assumption,
  $w(\mathcal F\setminus\mathcal F_1)\ge 0.01\, w(\mathcal F)$, so the
  potential decrease is at least
  \begin{align*}
    &\sum_{T\in\mathcal F}\left(2+\frac1{100(k-1)}\right) w(T)
      +\sum_{T\in\mathcal F} (2k-2)w(T) - H_k \sum_{S\in\mathcal F^*}2kw(S)
    \\&= \sum_{T\in\mathcal F}\left( 2k+\frac1{100(k-1)}\right) w(T) - H_k\sum_{S\in\mathcal F^*}2kw(S).
  \end{align*}
  Since this decrease is at most $0$, we conclude that
  \[ w(\mathcal F)\le\frac{2k}{2k+\nicefrac1{(100(k-1))}}H_k \cdot w(\mathcal
  F^*) \le (1-\Theta(\nicefrac1{k^2}))\, H_k \cdot w(\mathcal F^*). \qedhere\]
\end{proof}

To get a better-than-$H_k$ approximation for all instances, we can go
two ways: the first approach is to post-process the locally
optimal solution $\cF$ using the following lemma (which we prove in
\S\ref{sec:post-proc-algor}) to handle the case not handled by~\Cref{thm:double}:
\begin{restatable}[Post-processing]{lemma}{PostProc}
  \label{lem:postprocess}
  Given any solution $\cF$ with $w(\cF) \leq H_k w(\cF^*)$, let
  $\cF_1$ be the subcollection of sets in $\cF$ of unit size. If
  $w(\cF_1) \geq 0.99\, w(\cF)$, there is an efficient algorithm that
  returns a new solution $\cF'$ with
  $w(\cF') \leq 0.99\, H_k \cdot w(\cF)$.
\end{restatable}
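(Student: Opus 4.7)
The plan is to exploit the hypothesis that almost all of $w(\cF)$ is concentrated in singletons by re-covering the corresponding elements using cheaper sets drawn from $\cSdown$, and then comparing to $\cF$ itself.

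Let $U_1 := \bigcup_{\{e\}\in\cF_1}\{e\}$ be the element set covered by the singletons of $\cF$, so that $w(\cF_1)=\sum_{e\in U_1}w(\{e\})\ge 0.99\,w(\cF)$ and $w(\cF\setminus\cF_1)\le 0.01\,w(\cF)$. The first step is the structural observation that since $\cSdown$ is downward-closed and $\cF^*$ partitions $U$, the family $\{S\cap U_1:S\in\cF^*\}$ is a valid cover of $U_1$ in $\cSdown$ of total weight at most $w(\cF^*)$. Hence the optimal cost of covering the sub-instance $(U_1,\cSdown)$ is at most $w(\cF^*)$.

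The second step is algorithmic: I would run the classical greedy algorithm (equivalently, an LP-rounding procedure) on the sub-instance $(U_1,\cSdown)$ to obtain a cover $\cF''_1$ of $U_1$ of weight at most $H_k\cdot w(\cF^*)\le w(\cF)$, and then set $\cF'$ to be the cheaper of the two valid covers $\cF$ and $\cF''_1\cup(\cF\setminus\cF_1)$. Combining $w(\cF''_1)\le H_k\,w(\cF^*)$ with $w(\cF\setminus\cF_1)\le 0.01\,w(\cF)$ and rearranging using $w(\cF)\le H_k\,w(\cF^*)$ should give the stated multiplicative bound; taking the minimum with $\cF$ itself handles the regime in which $\cF^*$ is already so close to $\cF$ that the sub-instance greedy cannot beat it, and the bound then follows trivially from $H_k\ge 1/0.99$ for $k\ge 2$.

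The main obstacle I foresee is extracting a clean multiplicative savings from what is essentially an additive bound: $H_k\,w(\cF^*)$ from greedy on the sub-instance only matches $w(\cF)$ in the worst case. To get a strict improvement one must use the hypothesis $w(\cF_1)\ge 0.99\,w(\cF)$ to force $\cF^*$ to contain enough sets of size $\ge 2$ whose restrictions to $U_1$ compress the singleton cost $\sum_{e\in S\cap U_1}w(\{e\})$ down to $w(S)$; quantifying this compression and making the constants line up with the factor $0.99$ is the delicate part. An alternative route, which I would also consider, is to replace greedy on $(U_1,\cSdown)$ by a direct pairing argument: find any pair of singletons $\{e_1\},\{e_2\}\in\cF_1$ lying in some $S\in\cSdown$ with $w(S)<w(\{e_1\})+w(\{e_2\})$ and swap, repeating until no such pair exists; bounding the total savings via $\cF^*$ then yields the claim.
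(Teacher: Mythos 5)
There is a genuine gap, and you have in fact identified it yourself without closing it: your main route, running the $H_k$-approximate greedy on the sub-instance $(U_1,\cSdown)$, only guarantees a cover of $U_1$ of cost at most $H_k\cdot\mathrm{OPT}(U_1)\le H_k\,w(\cF^*)$, which in the worst case exactly matches $w(\cF)$ and yields no multiplicative saving. (Your inequality ``$H_k\,w(\cF^*)\le w(\cF)$'' is also backwards: the hypothesis gives $w(\cF)\le H_k\,w(\cF^*)$.) The ``compression'' you gesture at is precisely the missing quantitative step, and the alternative pairing heuristic does not obviously supply it either: at a local optimum of pair swaps one only gets, for each $S\in\cF^*$ and each pair of its elements still covered by singletons, $w(S)\ge w(\{e_1\})+w(\{e_2\})$, which after summing gives a residual singleton cost of order $\tfrac{k}{2}\,w(\cF^*)$ --- worse than $H_k\,w(\cF^*)$ for $k\ge 4$ --- besides raising termination/efficiency issues. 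So neither branch of the proposal establishes the claimed $0.99$ factor.

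The paper's proof supplies the two ideas you are missing. First, a case split: if $w(\cF)\le 0.99\,H_k\,w(\cF^*)$ we are already done, so one may assume $w(\cF_1)\ge 0.99\,w(\cF)\ge 0.98\,H_k\,w(\cF^*)$, i.e.\ the singletons alone cost nearly $H_k$ times the optimum; this is what makes a \emph{constant-factor} guarantee on the re-covering step sufficient. Second, instead of fully covering $U_1$, one solves a budgeted maximum-coverage problem: find a collection $\calC\subseteq\calS$ with $w(\calC)\le w(\cF^*)$ maximizing the weight of singletons of $\cF_1$ contained in $\cup_{S\in\calC}S$. Since $\cF^*$ itself is feasible with saving $w(\cF_1)\ge 0.98\,H_k\,w(\cF^*)$, a $(1-\nicefrac1e)$-approximation returns $\calC$ with added cost at most $w(\cF^*)$ (a single OPT, not $H_k$ OPTs --- this is what the budget buys) and saving at least $(1-\nicefrac1e)\cdot 0.98\,H_k\,w(\cF^*)$; replacing the covered singletons by $\calC$ gives cost at most $\bigl(1+H_k(1-0.98(1-\nicefrac1e))\bigr)w(\cF^*)\le 0.99\,H_k\,w(\cF^*)$ for all $k\ge3$ (and $k\le 2$ is edge cover, solvable exactly). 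In short, the fix is to cap the cost of the added sets by a knapsack budget and settle for covering a constant fraction of the expensive singleton mass, rather than paying an $H_k$ factor to cover all of $U_1$.
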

Hence,
returning the better of the solutions $\cF$ produced by the
local-search procedure, and $\cF'$ from using \Cref{lem:postprocess}
applied to $\cF$, gives a solution of cost at most
\[ H_k \cdot (1 - \Theta(\nf1{k^2}))\cdot w(\cF^*). \]
The second---better and more principled approach---is to modify the
potential function, which we do in the next section.

\subsection{An Improved Analysis using a Custom Potential Function}
\label{sec:improved-two-swap}

Let us consider a somewhat generic potential function: define
$f_1 :=1$, and let $f_i \geq 0$ be values to be fixed later,
satisfying $f_i \geq f_{i+1}$ for all $i$.  Let
$F_i:=\sum_{j=1}^if_j$, and define the following potential
\[ \Psi(\mathcal F)=\sum_{S\in\mathcal F}w(S)F_{|S|} .\] We get back
the Rosenthal potential by setting $f_i = 1/i$, but now we can
optimize over settings of $f_i$ to give better results. We again
consider the two-set local search algorithm, trying to reduce the
value of the new potential $\Psi(\mathcal F)$. The test moves remain
unchanged.

Moreover, the calculations remain essentially unchanged beyond replacing
$H_i$ by $F_i$; the argument about the average per-element decrease
being largest for $|(S \cup S') \cap T| = 2$ follows from the $f_i$
values being non-increasing. Consequently, the total
potential decrease, which is at most $0$, is at least
\begin{gather}
 {\sum_{T\in\mathcal F_1}2w(T) + \sum_{T\in\mathcal F\setminus\mathcal
    F_1}|T|w(T)(F_{|T|}-F_{|T|-2}) +\sum_{T\in\mathcal F}
  (2k-2)|T|w(T)f_{|T|} - F_k \sum_{S\in\mathcal F^*}2k\, w(S) .} \label{eq:main-two-prime}
\end{gather}
This equation can be compared
to~(\ref{eq:main-two}), where we had used the fact that the Rosenthal
potential satisfies $if_i = i(1/i) = 1$ and simplified the third summation above: 
$\sum_{T\in\mathcal F} (2k-2)|T|w(T)f_{|T|} = \sum_{T\in\mathcal F}
(2k-2)w(T)$. Let us abstract~(\ref{eq:main-two-prime}) further: define
\[ \alpha_t := \frac{1}{w(\cF)} \cdot \sum_{T \in \cF: |T|=t} w(T), \] and note
that $\alpha = (\alpha_1, \ldots, \alpha_k)$ gives a probability
distribution over the set sizes, and hence belongs to the probability
simplex $\triangle_k$. Dividing~(\ref{eq:main-two-prime}) through by
$2k$, simplifying  slightly, and using that this decrease is at most zero gives
\[ w(\cF) \cdot \bigg( \alpha_1 + \sum_{t \geq 2}
  \underbrace{\frac{t(F_t - F_{t-2}) + (2k-2) tf_t}{2k}}_{=: \phi_t}
  \, \alpha_t \bigg) \leq F_k\, w(\cF^*). \] Let $\phi_1 := 1$ and
$\phi_t$ be the coefficient of $\alpha_t$ as shown above.  Getting the
best approximation becomes an optimization problem: we want to set
$f_i$ values to minimize
$\max_{\alpha \in \triangle_k} \big\{ \frac{F_k}{\sum_t \phi_t
  \alpha_t} \big\}$, or equivalently, to minimize
$ F_k \cdot \max_t \big\{ \nicefrac{1}{\phi_t} \big\}$. Recall that we require $f_1=1$; we set
\begin{gather}
  f_t :=\frac1t-\frac1{4kt(t-1)} \text{ for $t>1$}. \label{eq:two-potential-choices}
\end{gather}
Then for $t\ge3$,
\begin{align*}
 t(F_{t}-F_{t-2}) &= t\left(\frac1{t}-\frac1{4kt(t-1)}+\frac1{t-1}-\frac1{4k(t-1)(t-2)}\right)
 \\&\ge 2+\frac1{t-1}-\frac1{2k(t-1)} \ge 2+\frac1{2(t-1)} , 
\end{align*}
and the bound $t(F_{t}-F_{t-2})\ge 2+\frac1{2(t-1)}$ can be separately verified for $t=2$.
Observe that $\phi_1 = 1$ by definition, and for $t > 1$, we have 
\[ \phi_t = \frac{t(F_t - F_{t-2}) + (2k-2) tf_t}{2k} \geq \frac{1}{2k}
  \bigg( 2 + \frac{1}{2(t-1)} + (2k-2) \bigg( 1 - \frac{1}{4k(t-1)}
  \bigg)\bigg) \geq 1.\]

Hence, the
approximation guarantee is at most $F_k \max_t (1/\phi_t) = F_k$.
Finally, we bound $F_k\le H_k-\nicefrac1{8k}$
since for $i=2$ alone, $f_2$ beats the corresponding term
$\nicefrac12$ from $H_k$ by $\nicefrac1{8k}$. 

\begin{theorem}[Two-Sets Moves] 
  \label{thm:two-swaps-better}
  Any local optimum for the two-sets local search using the potential
  $\Psi$ using the $f_i$ values from~(\ref{eq:two-potential-choices}) satisfies
  $w(\cF) \leq (H_k - \nf{1}{(8k)}) \cdot w(\cF^*)$. 
\end{theorem}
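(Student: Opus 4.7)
The plan is to copy the test-move analysis from \Cref{thm:double} with minimal changes, replacing $H_i$ by $F_i$ throughout, and then exploit the freedom in the $f_i$'s to extract an $\Omega(1/k)$ improvement over $H_k$.

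First, I would re-run the test-move construction verbatim: pair up consecutive optimal-side neighbors of each $T \in \cF$, double when the neighborhood size is $1$, and pad each $S \in \cF^*$ out to exactly $2k$ appearances. The one fact needed for the averaging step of \Cref{thm:double} to survive the substitution $H_i \mapsto F_i$ is that whenever $m := |(S \cup S') \cap T| \geq 2$, the per-element decrease $w(T)(F_{|T|} - F_{|T|-m})/m$ is at least the $m = 2$ value $w(T)(F_{|T|} - F_{|T|-2})/2$. This is immediate from $f_i \geq f_{i+1}$: the per-element decrease is the average of the trailing $m$ terms of a non-increasing sequence, which is non-decreasing in $m$. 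Local optimality together with this bound yields the master inequality~(\ref{eq:main-two-prime}).

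Next, I would divide~(\ref{eq:main-two-prime}) by $2k\, w(\cF)$ and introduce the weight profile $\alpha_t \in \triangle_k$, rearranging to
\[ \alpha_1 + \sum_{t \geq 2} \phi_t \, \alpha_t \;\leq\; F_k \cdot \frac{w(\cF^*)}{w(\cF)}, \]
with $\phi_t$ as defined in the text. The approximation ratio is then $F_k \cdot \max_t (1/\phi_t)$, so the proof reduces to two calculations. For the first, plugging $f_t = 1/t - 1/(4kt(t-1))$ into $\phi_t$ and combining the bound $t(F_t - F_{t-2}) \geq 2 + 1/(2(t-1))$ (easily checked for $t = 2$ and for $t \geq 3$) with $(2k-2)\, t f_t = (2k-2) - (2k-2)/(4k(t-1))$, a one-line arithmetic gives $\phi_t \geq 1 + 1/(4k^2(t-1)) > 1$. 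For the second, writing $F_k = 1 + \sum_{t=2}^k \big( 1/t - 1/(4kt(t-1)) \big) = H_k - \sum_{t=2}^k 1/(4kt(t-1))$ and keeping only the $t = 2$ term gives $F_k \leq H_k - 1/(8k)$.

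The only delicate point in the whole argument is the tuning of the constant $1/(4k)$ inside $f_t$: it must be small enough that the ``pad-term'' loss $(2k-2)/(4k(t-1))$ is dominated by the ``two-step'' gain $1/(2(t-1))$ in the definition of $\phi_t$, yet large enough that the $t = 2$ correction yields a nontrivial $1/(8k)$ saving in $F_k$. The scaling $1/(4k)$ achieves both simultaneously; any asymptotically different choice would break one of the two calculations. All remaining steps are routine substitutions into the proof of \Cref{thm:double}.
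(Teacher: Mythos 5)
Your proposal is correct and follows the paper's own argument in \S\ref{sec:improved-two-swap} essentially verbatim: same test moves, same monotonicity-based averaging step, same reduction to bounding $\phi_t \geq 1$ and $F_k \leq H_k - \nf{1}{(8k)}$ with the same choice of $f_t$. The only (harmless) difference is that your arithmetic gives the slightly sharper intermediate bound $\phi_t \geq 1 + \tfrac{1}{4k^2(t-1)}$, where the paper only records $\phi_t \geq 1$.
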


\section{Further Improvements: Moves with Width $k$}
\label{sec:moves-width-k}

We now consider the ``width-$k$'' generalization: add in sets $S_1,S_2,\ldots,S_k\in\mathcal S^\downarrow$ to $\mathcal F$, and replace each existing set $T \in \cF$ by $T \setminus (S_1 \cup \cdots \cup S_k)$. If the
  resulting partition has a smaller potential value, move to it. (Once again, we allow repeats in the sets, or equivalently, we allow moves of fewer than $k$ sets.)

Intuitively, if we choose sets $S_1,S_2,\ldots,S_k$ that cover a set $T\in\cF$ of size $k$, then $T$ disappears from $\cF$ and the improvement to the Rosenthal potential is $1+\nf12+\cdots+\nf1k$, or an average of $\nf1k\cdot(1+\nf12+\cdots+\nf1k)$, which is even better than the average $\nf12\cdot(\nf1{k-1}+\nf1k)$ in the width-$2$ case. We will actually use a custom potential function as before, but the Rosenthal potential provides a good baseline intuition.

We now define the test moves. For $T \in \cF$, let $\calN_T := \{ S \in \cF^* : |S \cap T| \neq 0 \}$ and similarly, for $S \in \cF^*$, let $\calN_S := \{ T \in \cF : |S \cap T| \neq 0 \}$.

\begin{enumerate}
\item For each $T \in \cF$, add the sets in $\calN_T$ together.

\item For each $S \in \cF^*$, add $S$. This move is multiplied by $(k - |\calN_S|)$ times so that each $S$ participates in exactly $k$ moves. 
\end{enumerate}

The local optimality ensures that none of the moves above decreases the potential. 
Let us consider the total potential change caused by the above moves.
First, each $S \in \cF^*$ is added exactly $k$ times, so the total potential increase by adding it is exactly $k w(S) F_{|S|}$, where we define the custom potential function $f_i$ later.

For $T \in \cF$, we consider the two types of moves separately.
\begin{itemize}
\item For the move when $\calN_T$ is added together, $T$ is removed from $\cF$, so the potential from $T$ is decreased by $w(T)F_{|T|}$. 
\item Other than this move, each $S \in \calN_T$ is added exactly $k - 1$ times, so each element of $T$ is removed $k - 1$ times more. Therefore, the total potential decrease for such moves is at least $w(T) \cdot (k-1) \cdot |T| \cdot (F_{|T|} - F_{|T|-1})$.
\end{itemize}

Therefore, the total potential decrease, which is at most $0$, is at least
\[ \sum_{T\in\mathcal F}w(T)F_{|T|} +\sum_{T\in\mathcal F}
  (k-1)|T|w(T)f_{|T|} - F_k \sum_{S\in\mathcal F^*}k\, w(S) .\]
We now follow the recipe from \S\ref{sec:improved-two-swap}: dividing
by $k$ and defining the probability distribution $\alpha_t := \frac{1}{w(\cF)} \cdot \sum_{T \in \cF: |T|=t} w(T)$ gives
\[ w(\cF) \cdot \bigg(  \sum_{t \geq1}
  \underbrace{\frac{F_t + (k-1) tf_t}{k}}_{=: \phi_t}
  \, \alpha_t \bigg) \leq F_k\, w(\cF^*) .\]

Let $\phi_t$ be the coefficient of $\alpha_t$ as shown above.  Getting the
best approximation is again an optimization problem: we want to set
$f_i$ values to minimize
$\max_{\alpha \in \triangle_k} \big\{ \frac{F_k}{\sum_t \phi_t
  \alpha_t} \big\}$, or equivalently, to minimize
$ F_k \cdot \max_t \big\{ \nicefrac{1}{\phi_t} \big\}$. We set
\begin{gather}
  f_t :=\frac1t-\frac{\log t}{8kt} \text{ for $t\ge1$}. \label{eq:two-potential-choices}
\end{gather}
To verify that $\max_t\{\nf1{\phi_t}\}\le1$, we first bound $F_t$ by
\[ F_t = \sum_{i=1}^tf_i = H_t-\sum_{i=1}^t\frac{\log i}{8ki} \ge H_t-\int_{x=1}^{t+1}\frac{\log x}{8kx}dx=H_t-\frac{\log^2(t+1)}{16k} \ge H_t-\frac{\log(t+1)}{16} \]
using $\log(t+1)\le t\le k$ for the last inequality.
Therefore,
\begin{align*}
 k\phi_t=F_t+(k-1)tf_t&=H_t-\frac{\log(t+1)}{16}+(k-1)t\left(\frac1t-\frac{\log t}{8kt}\right)
\\&\ge(k-1)+H_t-\frac{\log(t+1)}{16}-\frac{\log t}8
\\&\ge k\text{ for }t\ge2.
\end{align*}
We can show $k\phi_1\ge k$ separately for $t=1$ using $F_1=f_1=1$. Therefore, the approximation ratio is at most $F_k = H_k - \Theta(\nf{(\log k)^2}{k})$.

\section{A Polynomial-Time Implementation}
\label{sec:poly-time-impl}

%We defer two issues until \S\ref{sec:poly-time}:
There are two issues with the running time of the above algorithms: (a)
since we consider adding sets from the exponentially-large
subset-closed family of sets (i.e., we assumed $\calS = \cSdown$), 
finding such a feasible local move may
not naively be polynomial-time implementable. Moreover, (b)~reaching a
local optimum may not be feasible in polynomial time. The second issue
can be handled using the standard technique of stopping when none of
the local moves decrease the potential by more than a $\delta w(\calF)$ 
(see, e.g.,~\cite[\S9.1]{WS11}). 
By setting $\delta = \eps / |U|$ and changing the RHS of~\eqref{eq:1}
from $0$ to $\delta w(\calF)$ and using $|\calF^*| \leq |U|$, 
we can ensure that $w(\calF) \leq \frac{H_k}{1 - \eps} w(F^*)$ when
there is no such improving move. 
Assuming the initial solution is $\poly(n)$-approximate, one can ensure that
the running time is $\poly(n, \eps)$. 

%This means we can get an $(H_k+\varepsilon)$-approximation for any constant $\e > 0$ in a polynomial number of local moves. 

For issue~(a), let $\calS$ be the original collection of sets, not necessarily downwards closed. 
Suppose that we have the current solution $\calF$ and $S_1, \dots, S_p \in \calS$, 
and want to find appropriate pairwise disjoint subsets $S'_1 \subseteq S_1, \dots, S'_p \subseteq S_p$ such that the new solution that adds $S'_1, \dots, S'_p$ to $\calF$ (and subtracts their union from every $S \in \calF$) has a low potential. 
We do not know of an efficient way to compute the optimal choice of $S'_1, \dots, S'_p$. (One possible solution is, letting $T_1, \dots, T_q \in \calF$ be the sets intersecting $\cup_{i \in [p]} S_i$ (so $q \leq pk$), to (1) guess $|S'_i|$ for every $i \in [p]$, and $|T_j \cap (\cup_i S'_i)|$ for every $j \in [q]$ that exactly determine the potential change and (2) set up a network flow testing whether such $S'_1, \dots, S'_p$ exist, but it takes time $k^{O(pk)} \poly(n)$.) 

A more efficient implementation without necessarily finding the optimal $S'_1, \dots, S'_p$ is this: in all our previous proofs, when we considered adding $S \in \calF^*$ to the solution, the analysis always used $F_k w(S)$ as (an upper bound on) the increase of the potential by adding $S$ instead of the exact increase $F_{|S|} w(S)$. 
This means that we can indeed run more conservative local search; given $S_1, \dots, S_p$, 
let $A = \sum_{i \in [p]} w(S_i) F_k$ be the total increase from adding them, 
compute the total decrease $B$ caused by removing $\cup_{i \in [p]} S_i$ from the current sets in $\calF$, and only execute the local move when $A$ is smaller than $B$ (by $\delta w(\calF)$). 
All our analyses prove that we achieve the claimed approximation guarantees even when such a more conservative local move is not possible. Of course, if $S_1, \dots, S_p$ overlap, we can arbitrarily drop elements from them to ensure that the new solution is a partition as well; this further drops the potential.

\section{Tight Lower Bounds}
\label{sec:tight-lower-bounds}

In this section, we show that the approximation ratios of $H_k -
\Theta(\nf 1k)$ and $H_k - \Theta(\nf{(\log k)^2}{k})$ achieved by the
width-$2$ and width-$k$ local search respectively are optimal. In
fact, they are optimal under any potential of the form $ \Phi(\calF) = \sum_{S \in \calF} w(S) F_{|S|}$ where $F_{\ell} = f_1 + \dots + f_{\ell}$ for some $f_1 \geq \dots \geq f_k$. 
(We believe that the monotonicity condition is unnecessary, but currently do not have a formal proof.)

\subsection{Width-$2$ Lower Bounds}
\label{sec:width_two_lb}

We construct a collection of lower bound instances $\calI_{\ell}$, one for each $\ell
\in [k]$. The instance $\calI_{\ell}$ is the following:
\begin{enumerate}[nolistsep]
\item The universe is $U$. 
\item The optimum $\calF^*$ partitions $U$ into sets of size exactly $k$ where each set has cost $1$.
\item The local optimum $\calF$ partitions $U$ into sets of size
  exactly $\ell$ where each set has cost $\alpha_{\ell}$, to be
  determined below.
\item The set system is $(U, \calF \cup \calF^*)$. 
\item Let $G_{\ell}$ be a bipartite graph with $\calF$ and $\calF^*$
  as two sides, where each element $u \in U$ corresponds an edge
  connecting the two sets that contain $u$. We can ensure that the girth
  of $G_{\ell}$ is at least a constant arbitrarily larger than $k$~\cite{furedi1995graphs}.
\end{enumerate}

Let us determine the value of $\alpha_{\ell}$ so that $\calF$ becomes a local optimum. 
When $\ell = 1$, $\alpha_1 = \nf{F_k}{k}$ suffices, which yields the approximation ratio $F_k$.

For $\ell \geq 2$, there are essentially two kinds of moves: there are
width-$2$ local moves that add sets $S_1, S_2 \in \cF^*$ such that
$|(S_1 \cup S_2) \cap T| \leq 1$ for all $T \in \cF$, or to
add two distinct sets
$S_1, S_2 \in \calF^*$ that intersect a common $T \in \calF$. (The
girth condition ensures that there can be at most one such $T$, and $|S_1
\cap T| = |S_2 \cap T| = 1$. 
\begin{enumerate}[label=(\roman*)]
\item The potential increase due to adding $S_1, S_2$ is $2 F_k$.
\item Either each set $T \in \cF$ intersecting $S_1 \cup S_2$ loses
  one element, or some $T$ loses two elements and all the other sets
  intersecting $S_1$ or $S_2$ lose exactly one element. So the potential decrease due to removing elements from sets in $\calF$ is 
  \[
    (f_{\ell} + f_{\ell-1}) + (2k - 2) f_{\ell}.
  \]
  (Here we use the fact that $f_{\ell-1} \geq f_{\ell}$.)
\item Therefore, $\calF$ is a local optimum as long as 
  \[
    \alpha_{\ell} = \frac{2F_k}{(2k - 1)f_{\ell} + f_{\ell - 1}},
  \]
  and the approximation ratio in this case is 
  \[
    \frac{k \alpha_k}{\ell} = \frac{k}{\ell} \cdot \frac{2F_k}{(2k - 1)f_{\ell} + f_{\ell - 1}}
    = \frac{F_k}{\ell (f_{\ell} + \nf{f_{\ell - 1} - f_{\ell}}{2k} )}.
  \]
  \agnote{Not sure doing this is making things better, see comments
    below in red.}
\end{enumerate}
Fixing $f_1 = 1$ and optimizing $f_2, \dots, f_k$ to minimize the
worst-case approximation ratio over the $k$ instances $\calI_1, \dots, \calI_k$ shows that the best possible approximation ratio is determined by setting 
\[
f_{\ell} = \frac{1}{\ell} + \frac{1}{2k - 1} \bigg(\frac{1}{\ell} - f_{\ell - 1} \bigg)
\]
for $\ell = 2, \dots, k$. It yields $f_{\ell} = \nf{1}{\ell} - \Theta(\nf{1}{k\ell^2})$ as in the upper bound proof in Section~\ref{sec:improvement-SC}, showing that no potential can guarantee strictly better than $H_k - \Theta(\nf{1}{k})$. 

\alert{If I understand right, we're solving the LP:
  \begin{alignat*}{2}
    \max & ~~~~\lambda && \\
    f_1 + \ldots + f_k &= 1  && \\
    f_1 &\geq \lambda  && \\
    \ell f_\ell &\geq \lambda  & \qquad\qquad &\forall \ell \geq 2 \\
    \ell ( f_\ell + (f_{\ell-1}- f_\ell)/(2k) ) &\geq \lambda && \forall \ell \geq 2 \\
    f &\geq 0. &&
  \end{alignat*}
  Here $\lambda$ is the denominator above, so we're trying to maximize
  it. Now if we drop the constraints $\lambda \leq \ell f_\ell$ and
  solve the LP, and the solution satisfies the dropped constraints,
  then we're done.

  If so, could we just say: we'll find the optimal values of $f$ for
  just these ``worst-case'' moves. Then we show that for the weights
  we found, the other moves cannot do better, and hence we must be
  optimal? Is this argument kosher?

  BTW, should we fit a dual to show optimality? Or we just claim
  optimality?}
\elnote{Since we are showing that $f_{\ell} + \max(f_{\ell-1} - f_{\ell}, 0) / 2k \geq \lambda$, we cannot put two constraints separately?} 
\subsection{Width-$\width$ Lower Bounds}
\label{sec:width_w_lb}

\agnote{$w$ was the weight so changed to $\width$. It's a macro, so can
  change to whatever we like.}
The lower bound for the case of width-$\width$ follows the same framework.  Fix $\ell \in [k]$ and
consider the instance $\calI_{\ell}$ defined in
Section~\ref{sec:width_two_lb} (while ensuring that the girth
$\gg \width$), and determine the value of
$\alpha_{\ell}$ so that $\calF$ becomes a local optimum.  When
$\ell = 1$, $\alpha_1 = \nf{F_k}{k}$ suffices, which yields the
approximation ratio $F_k$.

For $\ell \geq 2$, let us consider what the best local width-$\width$ moves would be.
For $S_1, \dots, S_{\width}$ from the optimal solution $\calF^*$, consider the bipartite graph where the left vertices are $S_1, \dots, S_{\width}$, the right vertices are the sets from the current solution $\calF$ intersecting $S_1, \dots, S_{\width}$, and there is an edge if two sets intersect. Since the girth of the instance is much larger than $W$, this bipartite graph is a tree with exactly $Wk$ edges and $Wk + 1$ vertices, so the number of right vertices is $P = W(k - 1) + 1$. If we let $d_1 \geq \dots \geq d_P$ be the degrees of the right vertices, the potential decrease from the current set is 
\[
\sum_{i=1}^P (F_{\ell} - F_{\ell - d_i}). 
\]
Since both $P$ and $\sum_{i=1}^P d_i = Wk$ are fixed, the monotonicity of $f_1 \geq \dots \geq f_k$ implies that the above is when the degree is {\em maximally skewed}; 
defining $t \in \N$ and $1 \leq r < \ell - 1$ such that $\width - 1 = t(\ell - 1) + r$, we have
$t$ right vertices have degree $\ell$, one right vertex has degree $r + 1$, and the remaining $\width(k-1) - t$ right vertices have degree $1$. 
 As a sanity check, note that $t \cdot \ell + (r + 1) + \width(k - 1)
 - t = (t(\ell - 1) + r) + 1 + \width(k - 1) = \width k$.

With this move,
\begin{enumerate}[label=(\roman*)]
\item The potential increase due to adding $S_1, S_2, \dots S_\width$ is $\width F_k$.
\item The potential decrease due to removing elements from sets in $\calF$ is 
\[
t F_{\ell} + (F_{\ell} - F_{\ell - r - 1}) + (\width(k-1) - t)f_{\ell}.
\]
\item Therefore, $\calF$ becomes a local optimum if 
\[
\alpha_{\ell} = \frac{\width F_k}{t F_{\ell} + (F_{\ell} - F_{\ell - r - 1}) + (\width(k-1) - t)f_{\ell}},
\]
and the approximation ratio in this case is $\alpha_{\ell} \cdot (\nf k \ell)$.
\end{enumerate}

Again fixing $f_1 = 1$ and optimizing $f_2, \dots, f_k$ to minimize
the approximation ratio for $\calI_1, \dots, \calI_k$, yields
$f_{\ell} = \nf{1}{\ell} - \Theta(\nf{\log \ell}{k\ell})$ just like
we used in \S\ref{sec:moves-width-k} for the upper bound for $k$-moves.
Intuitively, setting $s = \nicefrac{(\width - 1)}{(\ell - 1)}$ so that $t(\ell - 1) + r = s(\ell - 1)$, the approximation ratio $\alpha \cdot (\nf k \ell)$ becomes
\begin{align*}
& 
\frac{k}{\ell} \cdot 
\frac{\width F_k}{t F_{\ell} + (F_{\ell} - F_{\ell - r - 1}) + (\width(k-1) - t)f_{\ell}}
\\
\geq & 
\frac{k}{\ell} \cdot 
\frac{\width F_k}{s F_{\ell} + (\width k - s\ell)f_{\ell}}
\\
= & 
\frac{F_k}{\ell \bigg( \frac{s F_{\ell}}{\width k} + (1- \nf{s\ell}{\width k})f_{\ell} \bigg)}
\\
= & 
\frac{F_k}{ \Theta(\frac{F_{\ell}}{k}) + (1- \Theta(\nf{1}{k}))\ell f_{\ell}},
\end{align*}
so that with $F_{\ell} = \Theta(\log \ell)$, the denominator becomes at least $1$ when $\ell f_{\ell} = (1 - \Theta(\nf {\log \ell}{k}))$. 
Therefore, no potential can guarantee strictly better than $H_k - \Theta(\nf{(\log k)^2}{k})$.

{\small
\bibliographystyle{alpha}
\bibliography{mybib}
}

\appendix

\section{The Post-processing Algorithm}
\label{sec:post-proc-algor}

\PostProc*

\begin{proof}
  We assume that $k \geq 3$, else the resulting edge-cover problem
  can be solved exactly in polynomial time.
  Suppose $w(\cF) \leq 0.99\, H_k
  \cdot w(\cF^*)$, then we are already
  done, so assume otherwise. The assumption of the lemma means
  $w(\cF_1) \geq 0.98 \,H_k \cdot w(\cF^*)$. 

  Note that adding $S \in \cF^*$ immediately allows us to remove the
  singleton sets which are contained in $S$. For a collection of sets
  $\calC \subseteq \calS$, let
  $\calN_{\calC, 1} := \{ T \in \calF_1 \mid T \subseteq \cup_{S \in
    \calC} S \}$ be the collection of singletons from $\cF_1$ that can
  be removed from the solution by adding $\calC$.
  
  We set up an instance of {\em Knapsack Cover}: % given an integer
  % parameter $t \geq 1$ 
  we seek a collection $\calC \subseteq \calS$
  with $w(\calC) \le w(\cF^*)$ that maximizes the {\em saving}
  $w(\calN_{\calC, 1})$. Since $\cF^*$ is a feasible solution with
  saving at least $w(\cF_1) \geq 0.98\, H_k \cdot w(\cF^*)$, we
  can use an $(1-\nicefrac{1}{e})$-approximation
  algorithm~\cite{Sviridenko04} for knapsack cover to find a feasible
  solution $\calC$ having weight at most $w(\cF^*)$ and savings
  at least $(1-\nicefrac{1}{e}) \cdot 0.98 \, H_k \cdot
  w(\cF^*)$. This means $\calF \cup \calC \setminus \calN_{\calC, 1}$ is a set
  cover with cost at most
  \[ \bigg(1 + H_k(1 -
    0.98 \, (1-\nicefrac{1}{e}))\bigg)\, w(\cF^*) \leq 0.99\, H_k w(\cF^*), \]
  for all integers $k \geq 3$.
\end{proof}

\section{The $H_k$ Bound via Relax-and-Round}
\label{sec:r-and-r}

The traditional analysis of relax-and-round achieves a bound of $O(\ln
k)$~\cite{hochbaum1982approximation}, but one can tighten the bound to $H_k$. Consider the
following algorithm:
\begin{quote}
  Solve the LP relaxation to get solution $x^*$. Repeatedly pick sets
  $S_1, S_2, \ldots$ from $\cS$, each time picking set $S$ with
  probability $\frac{x^*_S}{\sum_T x^*_T}$. Finally, let $\cF$ be the
  sets $S_i$ that cover elements not covered by previous sets
  $S_1, \ldots, S_{i-1}$.
\end{quote}

The following claim is proved in lecture notes of Young~\cite{Young}:
\begin{theorem}
  The algorithm above incurs expected cost at most $H_k \cdot \sum_S w(S) x^*_S$.
\end{theorem}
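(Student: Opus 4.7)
The plan is to rewrite the expected cost as a weighted sum of per-set cover times, and then to prove a sharp coupon-collector-style bound on each. Write $p_S := x^*_S/X$ with $X := \sum_T x^*_T$, so each $S_i$ is drawn i.i.d.\ from $\{p_S\}$, and for each $T \in \cS$ define the random cover time $T_T := \min\{t \geq 1 : T \subseteq S_1 \cup \cdots \cup S_t\}$.

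First, since $S_i$ is independent of $U_i$ (the latter being determined by $S_1,\ldots,S_{i-1}$), swapping sums gives
\[
  \E[\text{cost}] \;=\; \sum_T p_T\, w(T) \sum_{i \geq 1} \Pr[T \cap U_i \neq \emptyset] \;=\; \sum_T p_T\, w(T)\, \E[T_T],
\]
using $\{T \cap U_i \neq \emptyset\} = \{T_T \geq i\}$. Thus, if I can establish the key bound $\E[T_T] \leq H_{|T|}\, X$ for every $T$ with $|T| \leq k$, the theorem follows via
\[
\E[\text{cost}] \;\leq\; \sum_T p_T\, w(T)\, H_{|T|}\, X \;=\; \sum_T x^*_T\, w(T)\, H_{|T|} \;\leq\; H_k \sum_T w(T)\, x^*_T.
\]

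The main obstacle is proving this key bound, which I would do by induction on $|T|$. Conditioning on $S_1$ and folding the no-progress case $S_1 \cap T = \emptyset$ into the LHS gives the recursion
\[
  q_T\, \E[T_T] \;=\; 1 + \sum_{S \,:\, S \cap T \neq \emptyset} p_S\, \E[T_{T \setminus S}], \qquad q_T := \sum_{S : S \cap T \neq \emptyset} p_S,
\]
and every summand satisfies $|T \setminus S| < |T|$, so the induction hypothesis gives $\E[T_{T \setminus S}] \leq H_{|T| - |T \cap S|}\, X$. Substituting, the inequality $\E[T_T] \leq H_{|T|}\, X$ reduces to
\[
  X \sum_{S \,:\, S \cap T \neq \emptyset} p_S \bigl(H_{|T|} - H_{|T| - |T \cap S|}\bigr) \;\geq\; 1.
\]

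To close the induction, I would lower-bound each bracket by $|T \cap S|/|T|$ via the telescoping inequality $H_n - H_{n-a} \geq a/n$, swap sums to obtain $\sum_S p_S\, |T \cap S| = \sum_{e \in T} p_e$, and invoke LP feasibility $p_e = (\sum_{S \ni e} x^*_S)/X \geq 1/X$ to conclude $\sum_{e \in T} p_e \geq |T|/X$. Multiplying through gives exactly $1$, finishing the induction. The whole proof is a short calculation, with the only real content being setting up the renewal recursion correctly and matching it against the combinatorial identity $X \cdot |T|/X \cdot 1/|T| = 1$.
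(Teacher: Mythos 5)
Your proof is correct, and it reaches the bound by a route that looks different from the paper's but is equivalent at its core. You decompose the expected cost as $\sum_T p_T\, w(T)\,\E[T_T]$ and prove the cover-time bound $\E[T_T]\le H_{|T|}X$ by induction via a renewal recursion; the paper instead bounds the inclusion probability $\Pr[T\in\cF]\le H_{|T|}\,x^*_T$ by induction, conditioning on the first drawn set among those intersecting $T$. These two key lemmas are the same statement in disguise, since $\Pr[T\in\cF]=\sum_i \Pr[S_i=T,\ T\cap U_i\ne\emptyset]=p_T\sum_i\Pr[T\cap U_i\ne\emptyset]=p_T\,\E[T_T]$, and the arithmetic closing both inductions is identical: the telescoping bound $H_{|T|}-H_{|T|-|T\cap S|}\ge |T\cap S|/|T|$ combined with LP feasibility summed over $e\in T$. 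What your phrasing buys: the renewal identity makes the reduction to the key inequality transparent, you do not need the paper's ``no set in the support contains another'' normalization, and like the paper you actually obtain the sharper per-set bound $\sum_T w(T)\,x^*_T\, H_{|T|}$ before relaxing to $H_k$. Two small points to tidy: (i) with the literal definition $T_T=\min\{t\ge 1:\cdots\}$ you get $T_\emptyset=1$, whereas your recursion and the induction hypothesis at size $0$ require the convention $\E[T_\emptyset]=0$ (equivalently, define the residual time after $S_1$ to be $0$ when $T\subseteq S_1$); (ii) rearranging to $q_T\,\E[T_T]=1+\cdots$ subtracts $(1-q_T)\E[T_T]$ from both sides, so you should note $\E[T_T]<\infty$, which follows because each element of $T$ is covered at a geometric time with success probability at least $1/X$ by feasibility.
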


\alert{I reconstructed a proof that Niv had shown me: but I need to
  check with him before giving this one:

\begin{proof}
  We bound the probability that $S$ is part of the solution.  Without
  loss of generality, assume that no set in the support of $x^*$
  contains another. We proceed by induction, and show that
  $\Pr[S \in \cF] \leq H_{|S|} \cdot x_S^*$. Let $\cC$ be the sets
  that intersect $S$ (including $S$), $\cC' := \cC \setminus \{S\}$,
  and let $Z := \sum_{T \in \cC} x_T^*$.  We can focus on only the
  sets in $\cC$, and using induction (on the size of the set) to argue
  that
  \begin{align*}
    \Pr[S \in \cF] &= \frac{x_S^*}{Z} + \sum_{T \in \cC'}
                     \frac{x_T^*}{Z} \cdot x_S^* H_{|S\setminus T|} 
                     = \frac{x_S^*}{Z} \cdot \bigg( 1 + \sum_{T \in \cC'}
                     x_T^* H_{|S\setminus T|} \bigg) \\
                   &\leq \frac{x_S^*}{Z} \cdot \bigg( \sum_{T \in \cC} x^*_T \frac{|T\cap S|}{|S|}  + \sum_{T \in \cC'}
                     x_T^* H_{|S\setminus T|} \bigg); \\
    \intertext{here we use that $\sum_{T \in \cC: e \in T} x^*_T \geq 1$ for
    each $e \in S$. Indeed, summing these inequalities over all $e \in S$ and reversing the order of
    summation gives $\sum_{T \in \cC} x^*_T \; |T \cap S| \geq
    |S|$, which is what we used above. However, $H_{|S \setminus T|} + \frac{|T\cap S|}{|S|} \leq
    H_{|S|}$, giving}
                   &\leq \frac{x_S^*}{Z} \cdot \bigg( x^*_S + \sum_{T \in \cC'}
                     x_T^* H_{|S|} \bigg) \leq \frac{x_S^*}{Z} \cdot
                     H_{|S|}\,Z = H_{|S|} \; x_S^* , 
  \end{align*}
  and hence the theorem.
\end{proof}
}
\end{document}